\documentclass[9pt]{article}
\usepackage{spconf,amsmath,graphicx,hyperref}
\usepackage{times}
\usepackage{epsfig}
\usepackage{amsthm}
\usepackage{amssymb}
\usepackage{algorithm}
\usepackage{algorithmic}
\usepackage{enumerate}
\usepackage{multirow}
\usepackage{multicol}
\usepackage{arydshln}
\usepackage{color}
\usepackage{amssymb}
\usepackage{soul}
\usepackage{subfig}
\usepackage{subcaption}
\usepackage{caption}
\usepackage{float}
\usepackage{booktabs}
\usepackage{babel,blindtext}

\newtheorem{theorem}{Theorem}
\newtheorem{lemma}{Lemma}

\def\0{{\mathbf 0}}
\def\1{{\mathbf 1}}

\def\f{{\mathbf f}}

\def\r{{\mathbf r}}

\def\u{{\mathbf u}}
\def\v{{\mathbf v}}
\def\w{{\mathbf w}}
\def\x{{\mathbf x}}
\def\y{{\mathbf y}}
\def\z{{\mathbf z}}

\def\D{{\mathbf D}}

\def\H{{\mathbf H}}
\def\I{{\mathbf I}}

\def\L{{\mathbf L}}
\def\M{{\mathbf M}}

\def\R{{\mathbf R}}
\def\S{{\mathbf S}}
\def\T{{\mathbf T}}
\def\U{{\mathbf U}}
\def\V{{\mathbf V}}
\def\W{{\mathbf W}}

\def\ie{{\textit{i.e.}}}

\def\cE{{\mathcal E}}

\def\cG{{\mathcal G}}

\def\cK{{\mathcal K}}
\def\cL{{\mathcal L}}

\def\cN{{\mathcal N}}
\def\cO{{\mathcal O}}
\def\cP{{\mathcal P}}

\def\cS{{\mathcal S}}

\def\cV{{\mathcal V}}

\def\balpha{{\boldsymbol \alpha}}

\def\bLambda{{\boldsymbol \Lambda}}

\newif\ifarxiv 
\arxivtrue 

\title{Unrolling Lanczos for Ideal Low-pass Graph Filter Approximation}
\name{Parham Eftekhar$^\dag$, Gene Cheung$^\dag$\thanks{The work of G. Cheung was supported in part by the Natural Sciences and Engineering Research Council of Canada (NSERC) RGPIN-2025-06252 and Manulife.}, Mingxiao Liu$^\star$, H. Vicky Zhao$^\star$, Xuejun Han$^\#$, Eugene Wen$^\#$}
\address{ $^\dag$York University, Canada 
~~~~~~ $^\star$Tsinghua University, China
~~~~~~ $^\#$Manulife, Canada }
\begin{document}
\ninept
\maketitle
\begin{abstract}
Low-pass (LP) filtering is a fundamental operation in graph signal processing (GSP). 
Among finite-order nodal-domain methods, Lanczos-based filtering provides more accurate approximations of ideal LP filters than Chebyshev polynomial methods. 
We show that the approximation of Lanczos filtering can be further improved through algorithm unrolling and data-driven parameter learning. 
The key insight is that, because ideal LP filtering is a projection operation into the low-frequency eigen-subspace $\cS_K$, instead of approximating individual eigen-pairs of a graph Laplacian $\L$ as done in classical Lanczos, an unrolled Lanczos network can directly approximate $\cS_K$.
Specifically, we first establish a theorem identifying properties of the Lanczos tridiagonal matrix $\T_m$ that promote accurate approximation of the low-frequency eigen-subspace $\cS_K$. 
Guided by this theory, we relax the orthogonality constraint on Lanczos vectors, resulting in Ritz vectors that better span $\cS_K$.
To ensure numerical stability, we constrain $\T_m$ to be similar to a symmetric matrix, thereby guaranteeing real-valued eigenvalues. 
Experimental results on random graphs and learned graphs in two natural language processing (NLP) tasks show that our unrolled Lanczos network achieves superior ideal LP filter approximation compared to the classical Lanczos method.
\end{abstract}
\begin{keywords}
Low-pass graph filtering, algorithm unrolling, Lanczos approximation 
\end{keywords}
\section{Introduction}
\label{sec:intro}

Spectral filtering is at the heart of the field of \textit{graph signal processing} (GSP)
\cite{ortega18ieee,cheung18} that analyzes discrete signals residing on finite graph structures.
Specifically, defining graph frequencies as eigen-modes $\U$ of a chosen graph Laplacian matrix $\L = \U \bLambda \U^\top \in \mathbb{R}^{N \times N}$ for eigenvalues $\bLambda = \mathrm{diag}(\{\lambda_i\})$, a \textit{low-pass} (LP) filter $f(\lambda)$ is one that retains low frequency components of a signal $\x$ while suppressing high ones.
While one can compute the filter output in the graph transform domain as $\y = \U f(\bLambda) \U^\top \x$, eigen-decomposition of $\L$ to obtain eigen-modes $\U$ has $\cO(N^3)$ worst-case complexity. 
To mitigate the cubic complexity of eigen-decomposition, spectral graph filtering is often performed directly in the nodal domain.

For ideal LP graph filter $f(\lambda)$ with cutoff frequency $\omega_K$, \ie,  
\begin{align}
f(\lambda) = \left\{
\begin{array}{ll}
1 & \mbox{if}~~ \lambda \leq \omega_K \\
0 & \mbox{o.w.}
\end{array} 
\right. ,
\label{eq:idealLP}
\end{align}
it was shown \cite{Susnjara2015,Vu2021} that Lanczos-based filtering \cite{golub12} yields more accurate finite-order approximations than Chebyshev polynomial filtering \cite{Rivlin1990}, due to the discontinuity of the frequency response $f(\lambda)$ at cutoff $\omega_K$.
However, the Krylov basis produced by finite-order Lanczos is constrained by the Galerkin orthogonality condition \cite{golub12}, designed to optimize approximation of individual eigen-pairs\footnote{Another popular method to approximate extreme eigen-modes of a large symmetric matrix is \textit{Locally Optimal Block Preconditioned Conjugate Gradient} (LOBPCG) \cite{knyazev2001toward}. Unrolling of the iterative LOBPCG and identifying parameters for data-driven learning is another possible research direction; we leave this for future work.}.

In this paper, leveraging the growing trend in \textit{algorithm unrolling} that converts iterative optimization algorithms to interpretable deep networks \cite{monga21,yu23nips}, we show how unrolling the Lanczos algorithm into a neural net to learn parameters in a data-driven manner can further improve approximation for ideal LP graph filtering.
Beyond classical GSP, ideal LP graph filtering is of particular interest in the deep learning era: it was shown in \cite{Do2024} that a combination of graph learning and LP graph filtering constitutes a \textit{self-attention} mechanism \cite{Vaswani2017}, and hence unrolling a graph filtering algorithm results in a \textit{transformer} \cite{Do2024,Cai2026,Yao2026,Qi2026}. 
Thus, improved LP graph filtering via Lanczos unrolling holds promise for more efficient transformer-style architectures for a broad range of applications.

Our key insight is the following: 
\textit{because ideal LP filtering is a projection operation into the low-frequency eigen-subspace $\cS_K = \mathrm{span}(\{\u_i\}_{i=1}^K)$ of $\L$, instead of approximating individual eigen-pairs $\{(\lambda_i,\u_i)\}_{i=1}^K$ of $\L$ as done in classical Lanczos, an unrolled Lanczos network can directly approximate $\cS_K$}.
Specifically, we first develop a theorem stating conditions of the Lanczos tridiagonal matrix $\T_m$ to well approximate $\cS_K$.
Guided by this theory, we relax the orthogonality constraint on the Lanczos vectors used to construct $\T_m$, so that the resulting Ritz vectors can better span $\cS_K$.
Moreover, we preserve spectral stability by constraining $\T_m$ to be similar to a symmetric matrix, thereby guaranteeing real-valued eigenvalues.
Experimental results on random graph models and learned similarity graphs interpreted as self-attention for two natural language processing (NLP) tasks  show that our proposed unrolled Lanczos network achieves superior approximation of ideal LP filtering compared to the conventional Lanczos  approximation.

\section{Preliminaries}

\subsection{GSP Definitions}

A graph $\cG(\cV,\cE)$ contains a node set $\cV = \{1, \ldots, N\}$ and an edge set $\cE$, where $(i,j) \in \cE$ means nodes $i,j \in \cV$ are connected by an edge with weight $w_{i,j} \in \mathbb{R}$.
Denote by $\W \in \mathbb{R}^{N \times N}$ an \textit{adjacency matrix}, where $W_{i,j} = w_{i,j}$ if $(i,j) \in \cE$, and $W_{i,j} = 0$ otherwise.
Denote by $\D \in \mathbb{R}^{N \times N}$ a diagonal \textit{degree matrix}, where $D_{i,i} = \sum_{j} W_{i,j}$.
Denote by $\L \in \mathbb{R}^{N \times N}$ the \textit{combinatorial graph Laplacian} $\L \triangleq \D - \W$.
If the graph is undirected, \ie, $w_{i,j} = w_{j,i}, \forall i,j$, then $\W$ and $\L$ are symmetric.
If self-loops exist, \ie, $w_{i,i} \neq 0, \exists i$, then the \textit{generalized graph Laplacian} $\cL \triangleq \D - \W + \mathrm{diag}(\W)$ is used.
We use $\L$ and $\cL$ for both positive graphs and signed graphs with both positive and negative edges.

$\L$ (or $\cL$) is provably \textit{positive semi-definite} (PSD) if all edge weights are positive, \ie, $w_{i,j} \geq 0, \forall i,j$ \cite{cheung18}.
For a signed graph with both positive and negative edges, $\L$ (or $\cL$) may be indefinite.
Given $\L$ (or $\cL$) is real and symmetric, it can be eigen-decomposed into $\L = \U \bLambda \U^\top$ by the Spectral Theorem (Chapter 2, \cite{Horn2012}), where $\U \in \mathbb{R}^{N \times N}$ is the eigen-matrix with orthonormal eigenvectors $\{\u_i\}_{i=1}^N$ of $\L$ as columns, and $\bLambda = \mathrm{diag}(\{\lambda_i\}) \in \mathbb{R}^{N \times N}$ is a diagonal matrix with ordered eigenvalues $\{\lambda_i\}_{i=1}^N$ along its diagonal. 
Assuming the graph $\cG$ is positive, it is common in GSP to interpret the $i$-th eigen-pair $(\lambda_i, \u_i)$ of PSD $\L$ as the $i$-th frequency and Fourier mode for $\cG$.
Given a \textit{graph signal} $\x \in \mathbb{R}^N$, $\balpha = \U^\top \x$ is the \textit{Graph Fourier Transform} (GFT), wher $\alpha_i$ is the $i$-th GFT coefficient of signal $\x$.

\section{Review \& Analysis of Lanczos}
\label{sec:formulate}

We first review the Lanczos algorithm (Chapter 10, \cite{golub12}) commonly used to approximate extreme eigen-pairs of a large symmetric matrix $\L \in \mathbb{R}^{N \times N}$. 
We then develop theory to characterize conditions under which the reduced-dimension tridiagonal matrix $\T_m \in \mathbb{R}^{m \times m}$, $m \ll N$, produces lifted Ritz vectors that well approximate the smallest eigen-pairs of $\L$.

\subsection{Review of the Lanczos Algorithm}

Given a unit-norm initial vector $\v_1$ and a real symmetric matrix $\L$, the $m$-th \textit{Krylov space} is defined as 
\begin{align}
\cK_m(\L,\v_1) \triangleq \mathrm{span}(\{\v_1, \L \v_1, \ldots, \L^{m-1} \v_1\}). 
\end{align}
The Lanczos algorithm iteratively computes $m$ orthonormal vectors $\V_m = [\v_1, \ldots, \v_m]$---called the \textit{Lanczos vectors}---that span the Krylov space.
Specifically, $\L$ and $\V_m$ are related by

\vspace{-0.1in}
\begin{footnotesize}
\begin{align}
\V_m^\top \L \V_m &= \T_m = \left[ \begin{array}{cccccc}
\alpha_1 & \beta_1 & & & & 0 \\
\beta_1 & \alpha_2 & \beta_2 & & & \\
 & \beta_2 & \alpha_3 & \ddots & & \\
& & \ddots & \ddots & \beta_{m-2} & \\
& & & \beta_{m-2} & \alpha_{m-1} & \beta_{m-1} \\
0 & & & & \beta_{m-1} & \alpha_m
\end{array}
\right]
\label{eq:Lanczos_matrix}
\end{align}
\end{footnotesize}\noindent
where $\T_m$ is an $m \times m$ symmetric tridiagonal matrix. 

Another interpretation of \eqref{eq:Lanczos_matrix} is that $\T_m$ is the \textit{Rayleigh-Ritz projection} of $\L$ onto the Krylov space $\cK_m(\L,\v_1)$. 
Denote by $\{(\theta_i,  \y_i)\}_{i=1}^K$ the eigen-pairs of $\T_m$ corresponding to its $K$ smallest eigenvalues. 
Define the associated \textit{Ritz vectors}: 
\begin{align}
\x_i = \V_m \y_i, \qquad i=1,\ldots,K.    
\end{align}
The resulting Ritz pairs $\{(\theta_i,\x_i)\}_{i=1}^K$ satisfy the known \textit{Galerkin orthogonality condition} \cite{golub12}: 
\begin{align} 
\L \x_i - \theta_i \x_i \perp \cK_m(\L,\v_1), \qquad i=1,\ldots,K.
\label{eq:Galerkin}
\end{align} 
In words, \eqref{eq:Galerkin} states that the error of the eigen-pair approximation $\L \x_i - \theta_i \x_i$ is orthogonal to the Krylov space $\cK_m(\L, \v_1)$.
Thus, the first $K$ Ritz pairs provide accurate approximations to the first $K$ eigen-pairs of $\L$ given the Krylov space of dimension $m$.

\subsection{Coefficient Analysis of $\T_m$}

$\T_m$ in \eqref{eq:Lanczos_matrix} is defined by $\alpha_i$'s on the diagonal and $\beta_i$'s on the super- and sub-diagonals. 
We study conditions on the coefficients $\alpha_i$ and $\beta_i$ that promote accurate coverage of the low-frequency eigen-subspace $\cS_K = \mathrm{span}(\{\u_i\}_{i=1}^K)$.

\begin{theorem}[Low-Frequency Subspace Coverage and Ritz-Value Bounds]
\label{thm:low_frequency_bounds}

Let $\L\in\R^{N\times N}$ be a real symmetric matrix with eigenvalues \[ \lambda_1\leq\cdots\leq\lambda_K < \lambda_{K+1}\leq\cdots\leq\lambda_N. \] 
Let $\V_m=[\v_1,\ldots,\v_m]$ be the Lanczos basis, and let $\T_m=\V_m^\top \L \V_m$ be the associated Lanczos tridiagonal matrix. 
Suppose that 
\begin{align}
\alpha_i \le \lambda_{K+1}-\varepsilon, \qquad i=1,\ldots,m,    
\end{align}
for some $\varepsilon>0$, and 
\begin{align}
|\beta_i|\ge \beta_{\min}>0, \qquad i=1,\ldots,m-1.    
\end{align}
Then: 1) Denote by $\cS_K = \mathrm{span}(\{u_i\}_{i=1}^K)$ the eigen-subspace associated with the $K$ smallest eigenvalues of $\L$, and denote by
\begin{align}
\cP_K = \sum_{j=1}^{K} \u_j \u_j^\top     
\end{align}
the corresponding orthogonal projector. 
Then 
\begin{align}
\mathrm{tr}(\cP_K \V_m \V_m^\top) \ge \frac{m\varepsilon}{\lambda_{K+1}-\lambda_1}.    
\end{align}
2) Denote by $ \theta_1\le\cdots\le\theta_K$ the $K$ smallest eigenvalues of $\T_m$. 
Then
\begin{align}
\theta_1 \le (\lambda_{K+1}-\varepsilon) - 2\beta_{\min} \cos\!\left(\frac{\pi}{m+1}\right).    
\end{align}
Therefore, smaller diagonal coefficients $\alpha_i$ and larger off-diagonal coefficients $|\beta_i|$ are desirable for recovering the low-frequency eigensubspace $\cS_K$: the former increases the energy of the Lanczos basis within $\cS_K$, while the latter encourages smaller Ritz values associated with the low-frequency spectrum.
\end{theorem}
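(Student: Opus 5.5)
Part 1 is handled vector by vector, using the Rayleigh quotient of each Lanczos vector. Since $\V_m$ has orthonormal columns, $\alpha_i = \v_i^\top \L \v_i$ for each $i$. Expanding $\v_i$ in the eigenbasis gives
\begin{align}
\alpha_i = \sum_{j=1}^N \lambda_j (\u_j^\top \v_i)^2 ,
\end{align}
with weights $p_j = (\u_j^\top \v_i)^2$ that sum to one. Set $c_i = \v_i^\top \cP_K \v_i = \sum_{j\le K} p_j$. We bound the low-frequency part below by $\lambda_1 c_i$ and the high-frequency part below by $\lambda_{K+1}(1-c_i)$. This gives
\begin{align}
\lambda_{K+1}-\varepsilon \ \ge\ \alpha_i \ \ge\ \lambda_1 c_i + \lambda_{K+1}(1-c_i),
\end{align}
which rearranges to $c_i(\lambda_{K+1}-\lambda_1) \ge \varepsilon$, i.e.\ $c_i \ge \varepsilon/(\lambda_{K+1}-\lambda_1)$. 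The denominator is positive because $\lambda_1 \le \lambda_K < \lambda_{K+1}$. Summing over $i$ and using
\begin{align}
\mathrm{tr}(\cP_K \V_m\V_m^\top) = \sum_{i=1}^m \v_i^\top \cP_K \v_i
\end{align}
yields the claim. This step is routine.

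For part 2 we apply Courant--Fischer to $\T_m$ with a cleverly chosen test vector, namely the Perron vector of the path graph. First, replace $\T_m$ by $\D_s \T_m \D_s$, where $\D_s$ is a diagonal $\pm1$ signature matrix chosen so that every off-diagonal entry becomes $-|\beta_i|$. This is possible because a tridiagonal matrix's off-diagonal signs can be flipped one at a time along the path, and the spectrum is unchanged. Then take the test vector
\begin{align}
z_k = \sqrt{2/(m+1)}\,\sin\!\big(k\pi/(m+1)\big), \qquad k=1,\ldots,m.
\end{align}
This vector has unit norm, has positive entries, and is the top eigenvector of the path adjacency matrix with eigenvalue $2\cos(\pi/(m+1))$, so that $2\sum_{k} z_k z_{k+1} = 2\cos(\pi/(m+1))$.

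With this test vector, the Rayleigh quotient splits as
\begin{align}
\theta_1 \ \le\ \z^\top \D_s\T_m\D_s \z = \sum_k \alpha_k z_k^2 - 2\sum_k |\beta_k| z_k z_{k+1}.
\end{align}
The first sum is at most $(\lambda_{K+1}-\varepsilon)\sum_k z_k^2 = \lambda_{K+1}-\varepsilon$. For the second, each product $z_k z_{k+1}$ is positive and $|\beta_k|\ge \beta_{\min}$, so $\sum_k |\beta_k| z_k z_{k+1} \ge \beta_{\min}\sum_k z_k z_{k+1}$. Combining these gives the stated bound $\theta_1 \le (\lambda_{K+1}-\varepsilon) - 2\beta_{\min}\cos(\pi/(m+1))$.

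The main obstacle is bookkeeping rather than depth. The sign normalization via $\D_s$ has to be justified, and we need the explicit identity $\sum_k z_k z_{k+1} = \cos(\pi/(m+1))$. This identity follows from the path-Laplacian eigen-relation: summing the eigen-equation for the path adjacency matrix against $\z$ gives $2\sum_k z_k z_{k+1} = \z^\top \mathbf{A}_{\mathrm{path}} \z = 2\cos(\pi/(m+1))$. The concluding sentence of the theorem is only an interpretation of the two bounds and needs no separate proof.
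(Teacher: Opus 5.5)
Your proposal is correct and follows the paper's proof. Part 1 is the identical per-vector bound $\alpha_i \ge \lambda_1 c_i + \lambda_{K+1}(1-c_i)$ summed over $i$, and Part 2 is the same Courant--Fischer comparison with the path spectrum $2\beta_{\min}\cos(k\pi/(m+1))$; the only difference is that the paper first uses $\mathrm{diag}(\alpha_1,\ldots,\alpha_m) \preceq (\lambda_{K+1}-\varepsilon)\I$ and then asserts $\lambda_{\min}(\T_o)\le\lambda_{\min}(\T_{\min})$ by an unspecified ``Rayleigh quotient comparison,'' whereas your single sign-normalized test vector makes that comparison, including the treatment of the signs of the $\beta_i$, explicit.
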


\ifarxiv 
See Appendix\;\ref{append:proof1} for a proof.
\else
See Appendix\;A in \cite{Eftekhar2026} for a proof.
\fi

\section{Lanczos Algorithm Unrolling}

Instead of employing the Lanczos algorithm directly, we unroll its iterations into neural layers for data-driven parameter tuning. 
We first write the Lanczos algorithm in iterative algebraic form:
\begin{align}
    \w_j &= \L \v_j, ~~~ \alpha_j = \v_j^\top \w_j
    \label{eq:Lanczos_alpha}\\
    \r_j &= \w_j - \alpha_j \v_j - \beta_{j-1} \v_{j-1} 
    \label{eq:Lanczos_iterative} \\
    \beta_j &= \|\r_j\|, ~~~ \v_{j+1} = \r_j/\beta_j
    \label{eq:Lanczos_beta}
\end{align}
where \eqref{eq:Lanczos_iterative} can be interpreted as an orthogonalization step, so that $\r_j$ (and subsequent normalized $\v_{j+1}$) is orthogonal to previously computed $\v_j, \v_{j-1}, \ldots, \v_1$, akin to the \textit{Gram-Schmidt procedure} (Chapter 5, \cite{golub12}).

\subsection{Relaxing the Lanczos Recurrence}

Classical Lanczos seeks Ritz pairs satisfying the described Galerkin orthogonality condition \eqref{eq:Galerkin}, and therefore accurately approximate individual eigen-pairs of $\L$. 
However, ideal LP filtering depends only on the low-frequency eigen-subspace $\cS_K=\mathrm{span}(\{\u_i\}_{i=1}^K)$ through the projector $\cP_K = \U_K \U_K^\top$, and not on the individual eigenvalues or eigenvectors. 
Thus, \textit{exact recovery of the first $K$ eigen-pairs is unnecessary, provided that the subspace spanned by the Ritz vectors closely matches $\cS_K$}. 
This crucial observation motivates relaxing the classical Lanczos recurrence and directly learning a Krylov basis that better captures the target low-frequency eigen-subspace.

Specifically, instead of \eqref{eq:Lanczos_iterative}, we relax the orthogonality condition and introduce learnable parameters,  $\gamma^{(1)}_j$ and $\gamma^{(2)}_j$, at each step $j$:
\begin{align}
\r_j &= \w_j - \gamma^{(1)}_{j} \alpha_j \v_j - \gamma^{(2)}_{j} \beta_{j-1} \v_{j-1} .
\label{eq:Lanczos_iterative2}
\end{align}
In words, \eqref{eq:Lanczos_iterative2} enables $\gamma^{(1)}_j$ and $\gamma^{(2)}_j$ to \emph{steer} the evolution of the Krylov basis towards improved coverage of the eigen-space $\cS_K$. 
In contrast, classical Lanczos seeks solely to enforce orthogonality and optimize eigen-pair approximation. 

The resulting vectors are generally not mutually orthogonal. 
Consequently, the tridiagonal matrix $\T_m$ is no longer symmetric: 

\vspace{-0.05in}
\begin{footnotesize}
\begin{align}
\T_m = \left[ 
\begin{array}{ccccc}
\gamma^{(1)}_1 \alpha_1 & \gamma^{(2)}_2 \beta_1 & & &  0 \\
 \beta_1 & \gamma^{(1)}_2 \alpha_2 & \gamma^{(2)}_3 \beta_2 & & \\
 & \beta_2 & \ddots & \ddots & \\
& & \ddots & \gamma^{(1)}_{m-1} \alpha_{m-1} & \gamma^{(2)}_m \beta_{m-1} \\
0 & & & \beta_{m-1} & \gamma^{(1)}_m \alpha_m
\end{array}
\right] .
\label{eq:Tm_relaxed}
\end{align}
\end{footnotesize}\noindent
Further, the relation $\T_m = \V_m^\top\L\V_m$ no longer holds. 
Thus, $\T_m$ is not a Rayleigh--Ritz projection of $\L$ onto the Krylov subspace, and the resulting Ritz vectors do not satisfy the Galerkin orthogonality condition. 
Nevertheless, because each iterate is generated from $\L\v_j$, $\v_j$, and $\v_{j-1}$, the vectors $\{\v_1,\ldots,\v_m\}$ continue to span the Krylov subspace, provided that no breakdown occurs.


\subsection{Numerical Stability of $\T_m$}

One caveat is that a general nonsymmetric tridiagonal matrix may possess complex eigenvalues. 
To ensure that $\T_m$'s spectrum is real, we impose the constraint  $\gamma^{(2)}_j > 0, \forall j$. 
Under this condition, $\T_m$ is diagonally similar to a symmetric tridiagonal matrix and therefore has only real eigenvalues.
We formalize this in the following lemma.

\begin{lemma} 
Suppose 
\begin{align}
\T_m = \left[ \begin{array}{ccccc} a_1 & c_1 & & & 0 \\ b_2 & a_2 & c_2 & & \\ & b_3 & \ddots & \ddots & \\ & & \ddots & a_{m-1} & c_{m-1} \\ 0 & & & b_m & a_m \end{array} \right]
\end{align}
is a real tridiagonal matrix satisfying
\begin{align}
b_j c_{j-1} > 0, \qquad j=2,\ldots,m. 
\end{align} 
Then $\T_m$ is diagonal similar to a symmetric tridiagonal matrix. Consequently, all eigenvalues of $\T_m$ are real. 
\end{lemma}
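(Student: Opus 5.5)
The plan is to construct an explicit diagonal similarity. I will look for a diagonal matrix $\D = \mathrm{diag}(d_1,\ldots,d_m)$ with all $d_j > 0$ such that $\S \triangleq \D^{-1}\T_m\D$ is symmetric and tridiagonal. Since $\D$ is diagonal, conjugation leaves the diagonal entries $a_j$ unchanged and rescales the off-diagonals. Concretely, the $(j-1,j)$ entry of $\S$ becomes $c_{j-1}\, d_j/d_{j-1}$, and the $(j,j-1)$ entry becomes $b_j\, d_{j-1}/d_j$. The tridiagonal sparsity pattern is preserved automatically.

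Symmetry then requires, for each $j=2,\ldots,m$,
\begin{align}
c_{j-1}\frac{d_j}{d_{j-1}} = b_j \frac{d_{j-1}}{d_j}
\;\Longleftrightarrow\;
\left(\frac{d_j}{d_{j-1}}\right)^2 = \frac{b_j}{c_{j-1}} .
\end{align}
This is the one step that uses the hypothesis. Because $b_j c_{j-1} > 0$, both $b_j$ and $c_{j-1}$ are nonzero and of the same sign, so $b_j/c_{j-1} > 0$. I will therefore set $d_1 = 1$ and define recursively $d_j = d_{j-1}\sqrt{b_j/c_{j-1}}$. This gives positive reals $d_j$, so $\D$ is invertible.

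Substituting back, both off-diagonal entries of $\S$ in positions $(j-1,j)$ and $(j,j-1)$ equal $\mathrm{sign}(c_{j-1})\sqrt{b_j c_{j-1}}$. Hence $\S$ is a real symmetric tridiagonal matrix with diagonal $a_1,\ldots,a_m$. Similar matrices share a characteristic polynomial, and by the Spectral Theorem (Chapter 2, \cite{Horn2012}) $\S$ has only real eigenvalues, so all eigenvalues of $\T_m$ are real.

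There is no real obstacle here. The only care needed is the sign check that keeps the square root real and the verification that the two off-diagonal products agree. As a side remark, applied to \eqref{eq:Tm_relaxed} we have $b_j c_{j-1} = \gamma^{(2)}_j \beta_{j-1}^2$, which is positive whenever $\gamma^{(2)}_j > 0$ and no breakdown occurs ($\beta_{j-1}\neq 0$). This justifies the constraint imposed in the text.
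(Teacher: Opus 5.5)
Your proof is correct and uses the same construction as the paper: $d_1=1$, $d_j=d_{j-1}\sqrt{b_j/c_{j-1}}$, the similarity $\D^{-1}\T_m\D$, and real eigenvalues of the resulting symmetric tridiagonal matrix. Your off-diagonal value $\mathrm{sign}(c_{j-1})\sqrt{b_j c_{j-1}}$ is in fact more precise than the paper's $\sqrt{b_j c_{j-1}}$, which is correct only when $b_j, c_{j-1} > 0$, though symmetry holds in either case.
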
 

\ifarxiv 
See Appendix\;\ref{append:proof2} for a proof.
\else
See Appendix\;B in \cite{Eftekhar2026} for a proof.
\fi
Thus, given $ b_j=\gamma_j^{(2)}\beta_{j-1}$ and $c_{j-1}=\beta_{j-1}$, it follows that $b_j c_{j-1} = \gamma_j^{(2)}\beta_{j-1}^2$. 
Therefore, if $\gamma_j^{(2)}>0$ for all $j$, then all eigenvalues of $\T_m$ in \eqref{eq:Tm_relaxed} are real.

\subsection{Implementing Unrolled Lanczos Module}

Guided by Theorem~\ref{thm:low_frequency_bounds}, we design a lightweight module that adapts the Lanczos recurrence to better approximate the low-frequency eigensubspace $\cS_K$. 


Ideally, the initial vector $\v_1$ should contain mostly components of the first $K$ eigenvectors $\{\u_i\}_{i=1}^K$ of $\L$, \ie, $\v_1 \approx \sum_{i=1}^K q_i \u_i$, for $|q_i| > 0$. 
Doing so means $\L^{j-1} \v_1 \approx  \sum_{i=1}^K q_i \lambda_i^{j-1} \u_i$, and Lanczos vector $\v_j$'s can be informative.
Obviously, $\{\u_i\}_{i=1}^K$ of $\L$ are unknown \textit{a priori}, and we cannot construct $\v_1$ directly from $\{\u_i\}_{i=1}^K$.
Instead, assuming $\L$ is PSD, we compute a probe vector $\v_1$ using an iid random Gaussian vector $\z \sim \cN(0,\I)$:
\begin{align}
\v_1 = \frac{(\I + \sum_{t=1}^T a_t \L^t)^{-p} \z}{\|(\I + \sum_{t=1}^T a_t \L^t)^{-p} \z\|_2}, ~~ a_t \geq 0, ~\forall t
\end{align}
where $p \in \mathbb{Z}_+$, and $\{a_t\}_{t=1}^T$ are learnable filter coefficients constrained to be non-negative. 
$(\I + \sum_t a_t \L^t)^{-p}$ is LP filter.

\begin{lemma}
Let $\L \succeq 0$, $a_t \geq 0$ for $t=1,\ldots,T$, and $p\in\mathbb{N}$.
Then 
\begin{align}
\H(\L) = \left( \I + \sum_{t=1}^{T} a_t \L^t\right)^{-p}    
\end{align}
is a low-pass graph filter with respect to $\L$.    
\end{lemma}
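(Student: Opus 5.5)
The plan is to diagonalize $\H(\L)$ in the eigenbasis of $\L$ and reduce the claim to a statement about a scalar function of $\lambda$. Since $\L \succeq 0$ is real symmetric, write $\L = \U \bLambda \U^\top$ with $0 \le \lambda_1 \le \cdots \le \lambda_N$. Then $\L^t = \U \bLambda^t \U^\top$, so
\begin{align}
\I + \sum_{t=1}^T a_t \L^t = \U\, g(\bLambda)\, \U^\top, \qquad g(\lambda) \triangleq 1 + \sum_{t=1}^T a_t \lambda^t .
\end{align}
Because $a_t \ge 0$ and $\lambda_i \ge 0$, we have $g(\lambda_i) \ge 1 > 0$ for every $i$. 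Hence the matrix is symmetric positive definite and invertible, and its $p$-th power inverse is well defined:
\begin{align}
\H(\L) = \U\, h(\bLambda)\, \U^\top, \qquad h(\lambda) = g(\lambda)^{-p}.
\end{align}
This shows $\H(\L)$ is a spectral graph filter with frequency response $h(\lambda)$, in the sense of the introduction, where $\y = \U f(\bLambda)\U^\top \x$.

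Next I verify the low-pass property of $h$ on $[0,\infty)$. I take low-pass to mean that $h$ is non-increasing in $\lambda$, satisfies $h(0)=1$, and has $0 < h(\lambda) \le 1$, so low frequencies are retained and high ones attenuated. First, $h(0) = g(0)^{-p} = 1$. Second, for $\lambda \ge 0$ we have
\begin{align}
g'(\lambda) = \sum_t t a_t \lambda^{t-1} \ge 0,
\end{align}
so $g$ is non-decreasing and $g \ge 1$. Since $x \mapsto x^{-p}$ is decreasing on $(0,\infty)$ for $p \ge 1$, $h$ is non-increasing with $0 < h \le 1$. If some $a_t > 0$, then $g$ is strictly increasing on $(0,\infty)$ and $h(\lambda) \to 0$ as $\lambda \to \infty$. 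In that case the filter strictly attenuates higher frequencies: $h(\lambda_i) \ge h(\lambda_j)$ whenever $\lambda_i \le \lambda_j$, with strict inequality when $\lambda_i < \lambda_j$.

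The main subtlety is definitional rather than technical. The paper only defines low-pass informally, so the proof should state the precise criterion it verifies (monotone non-increasing response with unit DC gain) and note the degenerate case $a_t = 0$ for all $t$, where $\H = \I$ is trivially (weakly) low-pass. I would also remark that PSD-ness of $\L$ is exactly what guarantees $g(\lambda_i) \ge 1$, which ensures invertibility and positivity of the response. For indefinite $\L$, $g$ could vanish or become negative at negative eigenvalues, so this hypothesis cannot be dropped.
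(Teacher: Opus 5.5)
Your proof is correct and follows essentially the same route as the paper: you diagonalize $\L$, reduce $\H(\L)$ to the scalar response $h(\lambda) = (1+\sum_{t=1}^T a_t\lambda^t)^{-p}$, and show it is non-increasing on $[0,\infty)$ using $a_t \ge 0$ and $\lambda \ge 0$. The only difference is that the paper differentiates $h$ directly, whereas you compose the non-decreasing $g$ with the decreasing map $x\mapsto x^{-p}$; your extra checks (invertibility via $g(\lambda_i)\ge 1$, unit DC gain, and strict decrease when some $a_t>0$) are correct and fill in details the paper leaves implicit.
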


\ifarxiv 
See Appendix\;\ref{append:proof3} for a proof.
\else
See Appendix\;C in \cite{Eftekhar2026} for a proof.
\fi
To enforce $a_t \geq 0$, we employ the softplus function $a_t = \log ( 1 + e^{b_t})$ and learn unconstrained $b_t$.



To learn appropriate parameters $\gamma^{(1)}_i$ and $\gamma^{(2)}_i$ in \eqref{eq:Lanczos_iterative}, guided by Theorem\;\ref{thm:low_frequency_bounds} that states smaller $\alpha_i$ and larger $|\beta_i|$ in $\T_m$ promote recovery of $\cS_K$, we constrain $\gamma_i^{(1)}\leq 1$ and $\gamma_i^{(2)}\geq 1$, so that $\gamma_i^{(1)}\alpha_i$ is reduced and $\gamma_i^{(2)}|\beta_{i-1}|$ is amplified relative to classical Lanczos. We enforce these constraints by parameterizing the relaxation coefficients at layer $i$ as
\begin{align}
\gamma_i^{(1)} &= 1 - \bigl(\upsilon_i^{(1)}\bigr)^2, ~~~~~~
\gamma_i^{(2)} = 1 + \bigl(\upsilon_i^{(2)}\bigr)^2,
\end{align}
where $\upsilon_i^{(1)}$ and $\upsilon_i^{(2)}$ are unconstrained learnable parameters. 

\section{Experiments}
\label{sec:results}

\subsection{Simulation Experiments}

We first evaluate the proposed unrolled Lanczos model in settings where each graph is fixed and known. For these experiments, we use a synthetic stochastic block model (SBM) dataset and the PROTEINS dataset, as described below.

\noindent\textbf{Synthetic Dataset.} We synthesize a family of 50 stochastic block model (SBM) graphs, each comprising 100 nodes, with within-community and between-community edge probabilities set to $p_{in}=0.3$ and $p_{out}=0.02$, respectively. The generated graphs are partitioned into training, validation, and test sets using a 60/20/20 split, resulting in 30, 10, and 10 graphs, respectively.

\noindent\textbf{PROTEINS Dataset.} The PROTEINS dataset, obtained from the TUDataset collection~\cite{morris2020tudataset}, contains 1,113 graphs with varying numbers of nodes. To construct a fixed-size dataset, we discard graphs with fewer than 50 nodes and use breadth-first search (BFS) to extract a connected 50-node subgraph from each remaining graph. This preprocessing yields 238 graphs, which we partition into training, validation, and test sets of 142, 48, and 48 graphs, respectively.


Motivated by Theorem~\ref{thm:low_frequency_bounds}, we adopt a training objective that encourages small diagonal coefficients $\alpha_i$ and large off-diagonal magnitudes $|\beta_i|$, aiming to improve the network's approximation of ideal low-pass graph filtering. Since we assume $\L$ is PSD, $\alpha_i=\v_i^\top\L\v_i\geq0$; hence, penalizing $\alpha_i^2$ encourages small $\alpha_i$.
\begin{align}
\mathcal{J}_{\mathrm{theory}}
&= \sum_{j=1}^{m} |\alpha_j|^2
- \lambda \sum_{j=2}^{m} |\beta_{j-1}|^2.
\label{eq:theory_loss}
\end{align}

We evaluate approximation accuracy using the subspace error. 
Denote by $\U_K=[\u_1,\ldots,\u_K]$ the eigen-submatrix containing the first $K$ orthonormal eigenvectors of $\L$ as columns.
Denote by $\widehat{\U}_K$ a $K$-dimensional basis obtained from the lifted vectors $[\V_m\y_1,\ldots,\V_m\y_K]$. 
The subspace error and its normalized squared form are
\begin{align}
\mathcal{E}_{\mathrm{sub}}
&= \| (\I-\widehat{\U}_K\widehat{\U}_K^\top)\U_K \|_F,
\\
\overline{\mathcal{E}}_{\mathrm{sub}}^{\,2}
&= \frac{1}{K}\mathcal{E}_{\mathrm{sub}}^2.
\label{eq:subspace_error}
\end{align}
This metric measures the portion of the target eigen-subspace not captured by the approximation, independently of eigenvector signs or the choice of basis.
The normalized squared error lies in $[0,1]$, with smaller values indicating greater accuracy and zero indicating identical subspaces.

Although the subspace error is well suited for evaluation, using it directly as the training objective would require eigen-decomposition of $\T_m$ at every forward pass, increasing computational overhead and potentially introducing numerical instability when back-propagating through its eigenvectors. 
The theory-guided objective in \eqref{eq:theory_loss} avoids this decomposition and therefore provides a more efficient and numerically stable training criterion.


We implement the proposed model in PyTorch and train it for 50 epochs with a batch size of 4 on an NVIDIA RTX PRO 5000 Blackwell GPU. For each dataset, we select hyperparameters using the validation set through a grid search over inverse-polynomial powers \(\{3,4,5,6\}\), numbers of conjugate-gradient (CG) steps \(\{10,20,30\}\) to solve linear systems for $\v_1$, polynomial degrees \(\{7,9,11\}\), and learning rates \(\{10^{-3},3\times10^{-3}\}\). 

Table~\ref{tab:simulation} reports the subspace error for approximating the low-frequency eigen-subspace spanned by the first \(K\) eigenvectors, for different \(K\). 
We set the number of unrolled Lanczos layers to \(m=2K\). 
We compare the trained network with its initialization, where \(\gamma_i^{(1)}=\gamma_i^{(2)}=1\) and the recurrence reduces to standard Lanczos. 
We see that unrolled Lanczos noticeably reduces the error over standard Lanczos for both datasets and all values of \(K\).

\subsection{NLP Experiments}
In the second set of experiments, we evaluate the proposed model on two real-world natural language processing (NLP) tasks, QNLI and MRPC, from the GLUE benchmark~\cite{wang2018glue}. QNLI is a question–answer natural language inference task, where the model determines whether a given context contains the answer to a question, while MRPC is a paraphrase identification task that determines whether two sentences are semantically equivalent.


We consider a conventional Transformer block comprising a self-attention module followed by a residual connection and layer normalization, and a feed-forward network (FFN) followed by another residual connection and layer normalization. 
Inspired by \cite{Do2024}, we replace the self-attention module with our pair of modules: a graph-learning module and an ideal low-pass filter approximation module. Given a sequence of token embeddings $\{\x_i\}_{i=1}^{N}$, with $\x_i\in\mathbb{R}^{E}$, a feature projection $F:\mathbb{R}^{E}\rightarrow\mathbb{R}^{D}$ produces lower-dimensional representations $\f_i=F(\x_i)\in\mathbb{R}^{D}$, where $D\ll E$. For each pair of tokens $i$ and $j$, we measure their feature similarity using the squared \textit{Mahalanobis distance}
\begin{align}
d_{i,j} = (\f_i - \f_j)^\top \M (\f_i - \f_j) \geq 0,
\end{align}
where the symmetric metric matrix $\M\succeq0$ is PSD. 
Then, feature distance $d_{i,j}$ is mapped to non-negative edge weight $w_{i,j} \in [0,1]$ via an exponential kernel:
\begin{align}
w_{i,j} = \exp (-d_{i,j}) .
\label{eq:weightExp}
\end{align}
Using these learned edge weights, we construct a combinatorial graph Laplacian $\L$, which serves as input to the ideal low-pass filter approximation module. The Laplacian is processed by the unrolled Lanczos network, and the resulting tridiagonal matrix $\T_m$ is eigen-decomposed to approximate the ideal low-pass filter.

We train the model for 10 epochs using a learning rate of $10^{-4}$. The number of unrolled Lanczos layers is selected from $\{15,20,25\}$ based on validation performance. Table~\ref{tab:nlp} compares our model with a native self-attention baseline and a standard Lanczos baseline, obtained by fixing all tunable parameters $\gamma$ to one. 
We report both accuracy and F1 score.
We see that our proposed unrolled Lanczos outperforms standard Lanczos noticeably for both NLP tasks, and outperforms native self-attention as well for most metrics.


\begin{table}[t]
\centering
\caption{Comparison on synthetic (s) and PROTEINS (P) datasets for the first $K$ eigenvectors associated with the smallest eigenvalues.}
\label{tab:simulation}
\vspace{-0.1in}
\begin{small}
\begin{tabular}{lccc}
\toprule
Method & $K=6$ & $K=8$ & $K=10$  \\
\midrule
Unrolled Lanczos (s) & 0.439004 & 0.423422 & 0.367661 \\
Standard Lanczos (s) & 0.542560 & 0.503438 & 0.480293 \\
Unrolled Lanczos (P) & 0.358199 & 0.472659 & 0.433804 \\
Standard Lanczos (P) & 0.638954 & 0.553640 & 0.481555 \\
\bottomrule
\end{tabular}
\end{small}
\end{table}

\begin{table}[t]
\centering
\caption{Comparison on QNLI and MRPC in Accuracy / F1.}
\label{tab:nlp}
\vspace{-0.1in}
\begin{small}
\begin{tabular}{lcc}
\toprule
Method & QNLI & MRPC \\
\midrule
\textbf{Unrolled Lanczos} & 61.82 / \textbf{65.75} & \textbf{69.61 / 81.76} \\
Standard Lanczos & 55.43 / 30.84 & 68.87 / 81.41 \\
Self-Attention & \textbf{61.85} / 62.15 & 68.38 / 81.22 \\
\bottomrule
\end{tabular}
\end{small}
\end{table}

\section{Conclusion}
\label{sec:conclude}

To efficiently implement ideal low-pass (LP) graph filtering without computationally expensive eigen-decomposition, we unroll the iterative Lanczos algorithm into neural layers to enable data-driven parameter learning. 
The key observation is that ideal LP filtering is a projection onto the low-frequency eigen-subspace $\cS_K=\mathrm{span}(\{\u_i\}_{i=1}^{K})$. 
Motivated by this observation, we relax the orthogonality constraints imposed by classical Lanczos, which are designed for approximation of individual eigen-pairs, and instead learn Ritz vectors that more directly capture the target eigen-subspace $\cS_K$. 
Experimental results show that our unrolled Lanczos network achieves more accurate ideal LP graph filter approximation than the classical Lanczos method.

\ifarxiv 

\appendix

\section{Proof of Theorem\;1}
\label{append:proof1}

\begin{proof} 
For each Lanczos vector $\v_i$, decompose 
\begin{align}
\v_i=\cP_K\v_i+\z_i, \qquad \z_i=(\I-\cP_K)\v_i.     
\end{align}
Since the eigenvalues of $\L$ restricted to $\cS_K$ are at least $\lambda_1$ and those restricted to $\cS_K^\perp$ are at least $\lambda_{K+1}$, we have 
\begin{align}
\alpha_i = \v_i^\top \L \v_i \ge \lambda_1\|\cP_K\v_i\|_2^2 + \lambda_{K+1}\|\z_i\|_2^2.    
\end{align}
Using $\|\cP_K\v_i\|_2^2+\|\z_i\|_2^2=1$ and the assumption $\alpha_i\le \lambda_{K+1}-\varepsilon$, we obtain 
\begin{align}
\|\cP_K\v_i\|_2^2 \ge \frac{\varepsilon}{\lambda_{K+1}-\lambda_1}.    
\end{align}
Summing over $i=1,\ldots,m$ yields 
\begin{align}
\mathrm{tr}(\cP_K\V_m\V_m^\top) = \sum_{i=1}^{m}\|\cP_K\v_i\|_2^2 \ge \frac{m\varepsilon}{\lambda_{K+1}-\lambda_1}.    
\end{align}
Next, write 
\begin{align}
\T_m=\mathrm{diag}(\alpha_1,\ldots,\alpha_m)+\T_o,     
\end{align}
where $\T_o$ is obtained from $\T_m$ by setting all diagonal entries to zero. 
Since $\alpha_i\le \lambda_{K+1}-\varepsilon$, 
\begin{align}
\mathrm{diag}(\alpha_1,\ldots,\alpha_m) \preceq (\lambda_{K+1}-\varepsilon)\I,    
\end{align}
and hence 
\begin{align}
\T_m \preceq (\lambda_{K+1}-\varepsilon)\I+\T_o.     
\end{align}
By eigenvalue monotonicity, 
\begin{align}
\theta_1 \le (\lambda_{K+1}-\varepsilon)+\lambda_{\min}(\T_o).    
\end{align}
Let $\T_{\min}$ denote the tridiagonal matrix with zero diagonal entries and all off-diagonal entries equal to $\beta_{\min}$. Since $|\beta_i|\ge \beta_{\min}$, a Rayleigh quotient comparison yields 
\begin{align}
\lambda_{\min}(\T_o) \le \lambda_{\min}(\T_{\min}).    
\end{align}
The eigenvalues of $\T_{\min}$ are 
\begin{align*}
2\beta_{\min}\cos\!\left(\frac{k\pi}{m+1}\right), \qquad k=1,\ldots,m,    
\end{align*}
and therefore 
\begin{align}
\lambda_{\min}(\T_{\min}) = -2\beta_{\min} \cos\!\left(\frac{\pi}{m+1}\right).    
\end{align}
Combining the above inequalities gives 
\begin{align}
\theta_1 \le (\lambda_{K+1}-\varepsilon) - 2\beta_{\min} \cos\!\left(\frac{\pi}{m+1}\right).    
\end{align}
\end{proof}

\section{Proof of Lemma\;1}
\label{append:proof2}

\begin{proof} 
Define the diagonal matrix $\D = \mathrm{diag}(d_1,\ldots,d_m),$ with 
\begin{align}
d_1 = 1, \qquad d_j = d_{j-1} \sqrt{\frac{b_j}{c_{j-1}}}, \quad j=2,\ldots,m.   
\end{align}
Since $b_j c_{j-1}>0$, the ratio $b_j/c_{j-1}$ is positive and hence every $d_j$ is real and nonzero. 
Consider the similarity transform
\begin{align}
\S = \D^{-1}\T_m\D.    
\end{align}
The diagonal entries of $\S$ equal those of $\T_m$, while for $j=2,\ldots,m$, 
\begin{align}
S_{j,j-1} = \frac{b_j d_{j-1}}{d_j} = \sqrt{b_j c_{j-1}},
\end{align}
and 
\begin{align}
S_{j-1,j} = \frac{c_{j-1} d_j}{d_{j-1}} = \sqrt{b_j c_{j-1}}.
\end{align}
Hence $\S$ is symmetric: 
\begin{small}
\begin{align}
\S = \left[ \begin{array}{ccccc} a_1 & \sqrt{b_2c_1} & & & 0 \\ \sqrt{b_2c_1} & a_2 & \sqrt{b_3c_2} & & \\ & \sqrt{b_3c_2} & \ddots & \ddots & \\ & & \ddots & a_{m-1} & \sqrt{b_mc_{m-1}} \\ 0 & & & \sqrt{b_mc_{m-1}} & a_m \end{array} \right].
\end{align}
\end{small}\noindent
Since $\T_m$ is similar to the real symmetric matrix $\S$, $\T_m$ and $\S$ have the same eigenvalues. 
Because every real symmetric matrix has real eigenvalues, all eigenvalues of $\T_m$ are real. 
\end{proof}

\section{Proof of Lemma\;2}
\label{append:proof3}

\begin{proof}
Since $\L$ is real symmetric and PSD, it admits the eigen-decomposition $\L = \U \bLambda \U^\top$, where $\bLambda = \mathrm{diag}(\lambda_1,\ldots,\lambda_N)$ and $0\leq \lambda_1\leq \cdots \leq \lambda_N$. Using functional calculus, $H(\L) = \U \, \mathrm{diag}\!\bigl(h(\lambda_1),\ldots,h(\lambda_N)\bigr) \U^\top$, where the spectral response is
\begin{align}
h(\lambda) = \left(1+\sum_{t=1}^{T} a_t\lambda^t\right)^{-p}.
\end{align}
Because $a_t\geq0$ and $\lambda\geq0$, 
\begin{align}
h'(\lambda)
=-p\left(1+\sum_{t=1}^{T}a_t\lambda^t\right)^{-p-1}
\sum_{t=1}^{T}t\,a_t\lambda^{t-1}\leq0.
\end{align}
Hence, $h(\lambda)$ is non-increasing with respect to $\lambda$. 
Therefore, for any $\lambda_i \leq \lambda_j$, 
\begin{align}
h(\lambda_i)\geq h(\lambda_j),    
\end{align}
meaning that lower graph frequencies are amplified (or attenuated less) than higher graph frequencies. 
\end{proof}

\section*{\centering\normalsize DISCLAIMER}
This work is experimental research and does not represent the practices, positions, or views of any specific organization.

\fi

\begin{small}
\bibliographystyle{IEEEbib}
\bibliography{refs}
\end{small}

\end{document}